\def\ba{\begin{array}}
	\def\ea{\end{array}}
\def\baa{\begin{align}}
	\def\eaa{\end{align}}
\newcommand{\bsq}{\begin{subequations}}
	\newcommand{\esq}{\end{subequations}}
\newcommand{\beq}{\begin{equation}}
\newcommand{\eeq}{\end{equation}}
\newcommand{\bq}{\begin{eqnarray}}
\newcommand{\eq}{\end{eqnarray}}
\newcommand{\bqn}{\begin{eqnarray*}}
	\newcommand{\eqn}{\end{eqnarray*}}
\newcommand{\bee}{\begin{enumerate}}
	\newcommand{\eee}{\end{enumerate}}
\newcommand{\bi}{\begin{itemize}}
	\newcommand{\ei}{\end{itemize}}
\newcommand{\wang}[1]{\ifthenelse{\boolean{showcomments}}
	{ \textcolor[rgb]{1,0,1}{(ZW:  #1)}}{}}
\newcommand{\fliu}[1]{\ifthenelse{\boolean{showcomments}}
	{ \textcolor{red}{(FL:  #1)}}{}}
\newcommand{\slow}[1]{\ifthenelse{\boolean{showcomments}}
	{ \textcolor{blue}{(SL:  #1)}}{}}
\theoremstyle{definition}
\newtheorem{theorem}{Theorem}
\newtheorem{lemma}[theorem]{Lemma}
\theoremstyle{definition}
\newtheorem{remark}{Remark}
\newtheorem{assumption}{\textit{Assumption}}}
\journal{xxx}
\begin{document}
\graphicspath{{Paper_Fig/}}
\setstretch{1}
\begin{frontmatter}

\title{Distributed Optimal Load Frequency Control Considering Nonsmooth Cost Functions}

\tnotetext[mytitlenote]{This work was supported  by the National Natural Science Foundation of China ( No. 51677100, U1766206, No. 51621065).}

%% or include affiliations in footnotes:
\author[thu]{Zhaojian~Wang}
%\author[thu]{Shengwei~Mei}
\author[thu]{Feng~Liu\corref{mycorrespondingauthor}}\ead{lfeng@mail.tsinghua.edu.cn}
\author[nrel]{Changhong Zhao}
\author[thu]{Zhiyuan Ma}
\author[thu]{Wei Wei}

%\author[uws]{Peng Yi}
%\author[gro]{Ming Cao}

\cortext[mycorrespondingauthor]{Corresponding author}

\address[thu]{State Key Laboratory of Power Systems, Department of Electrical Engineering, Tsinghua University, Beijing 100084, China}
\address[nrel]{National Renewable Energy Laboratory, Golden, CO, 80401, US}
%\address[UofT]{The Edward S. Rogers Sr. Department of Electrical and Computer Engineering, University of Toronto, Toronto, Canada}
%\address[gro]{Faculty of Science and Engineering, University of Groningen, Groningen 9747 AG, The Netherlands}

\begin{abstract}
	This work addresses the distributed frequency control problem in power systems considering controllable load with a nonsmooth cost. The nonsmoothness exists widely in power systems, such as tiered price, greatly challenging the design of distributed optimal controllers. In this regard, we first formulate an optimization problem that minimizes the nonsmooth regulation cost, where both capacity limits of controllable load and tie-line flow are considered. Then, a distributed controller is derived using the Clark generalized gradient. We also prove the optimality of the equilibrium of the closed-loop system as well as its asymptotic stability. Simulations carried out on the IEEE 68-bus system verifies the effectiveness of the proposed method.
\end{abstract}

\begin{keyword}
	Nonsmooth optimization, distributed control, load frequency control, Clark generalized gradient.
\end{keyword}
\end{frontmatter}

\section{Introduction}
With the proliferation of renewable generations, frequency control in power systems is facing a great challenge as power mismatch can fluctuate rapidly in a large amount. In this situation, the conventional centralized hierarchical control architecture may not respond fast enough due to large inertia of the traditional synchronous generators \cite{dorfler2016breaking,wang2019distributed}. On the other hand, load-side controllable resources with fast response capabilities provide a new opportunity to frequency regulation \cite{Schweppe1980Homeostatic}. In addition, as controllable loads are usually dispersed geographically vast across the power system, a distributed architecture is more desirable for load frequency control than the centralized one. 

Recently, the so-called \emph{reverse engineering} methodology is proposed by combining frequency control with optimal operation problems in power systems \cite{zhang:real, Stegink:aunifying, Li:Connecting, Cai:Distributed, Changhong:Design}. Under this framework, distributed load frequency control is widely investigated \cite{Changhong:Design, mallada2017optimal, Distributed_I:Wang, Kasis:Primary1, Distributed_II:Wang,zhao2018distributed, wang2018distributed}. In \cite{Changhong:Design}, an optimal load frequency control problem is formulated and a distributed controller is derived using controllable loads to realize primary frequency control. To eliminated the frequency deviation, the method is further extended in \cite{mallada2017optimal, zhao2018distributed} to realize a secondary load frequency control. At the same time, the tie-line power limit is considered. The design approach is generalized in \cite{Kasis:Primary1}, where the specific model requirement is eliminated. It only requires that the bus dynamics satisfy a passivity condition to guarantee asymptotic stability. In \cite{Distributed_I:Wang,Distributed_II:Wang}, the operational constraints including regulation capacity limits and tie-line power limits are considered, which guarantee both steady-state and transient capacity limit constraints. In \cite{wang2018distributed}, the distributed load frequency control under time-varying and unknown power injection is investigated, which can recover the nominal frequency even under unknown disturbances. 
The distributed load frequency control is of course a paid service, i.e., the system operator needs to pay for the controllable load to regulate their power. In the existing literature, the cost of controllable load is assumed to be differentiable, or equivalently, the price of the controllable load is continuous. This is not true for a variety of cases, e.g., the price may have step changes when controllable load values are in different intervals. In such a situation, the regulation is inherently nonsmooth, which makes existing methods difficult to apply.

This work designs a distributed controller for the optimal load frequency control in power systems, where the regulation cost function can be nonsmooth. We relax the assumption of the objective function from being differentiable to nonsmooth. This work is partly motivated by \cite{zeng2018distributed}. However, different from it, we consider the interplay between the solving algorithm and the power system dynamics and prove the stability of the closed-loop system. Another difference is that the objective function in \cite{zeng2018distributed} is strictly convex with respect to all decision variables. That is not necessary in our work, where some variables may not appear in the objective function. In such a situation, we prove the asymptotic convergence of the closed-loop system as well as the optimality of equilibrium.  

The rest of this paper is organized as follows. In Section II, we introduce some preliminaries and system models. Section III formulates the optimal load frequency control problem and introduces the distributed controller. In Section IV, convergence of the closed-loop system and optimality of the equilibrium point are proved. We confirm the performance of the controller via simulations on IEEE 68-bus system in Section V. Section VI concludes the paper.

\section{Problem Description}
\subsection{Preliminaries and notations}
\subsubsection{Notations}
In this paper, use $\mathbb{R}^n$ ($\mathbb{R}^n_{+}$) to denote the $n$-dimensional (nonnegative) Euclidean space. For a column vector $x\in \mathbb{R}^n$ (matrix $A\in \mathbb{R}^{m\times n}$), $x^{\mathrm{T}}$($A^{\mathrm{T}}$) denotes its transpose. For vectors $x,y\in \mathbb{R}^n$, $x^{\mathrm{T}}y=\left\langle x,y \right\rangle$ denotes the inner product of $x,y$. 
$\left\|x \right\|=\sqrt{x^{\mathrm{T}}x}$ denotes the Euclidean norm of $x$. 
Use $\textbf{1}$ to denote the vector with all $1$ elements. For a matrix $A=[a_{ij}]$, $a_{ij}$ stands for the entry in the $i$-th row and $j$-th column of $A$. Use $\prod_{i=1}^n\Omega_i$ to denote the Cartesian product of the sets $\Omega_i, i=1, \cdots, n$. 
Given a collection of $y_i$ for $i$ in a certain set $Y$, $y$ denotes the column vector
$y := (y_i, i\in Y)$ with a proper dimension, and $y_i$ as its components.

\subsubsection{Preliminaries}
Let $ f(x):\mathbb{R}^n\rightarrow\mathbb{R} $ be a locally Lipschitz continuous function and denote its Clarke generalized gradient by $\partial f(x)$ \cite[Page 27]{clarke:optimization}. For a continuous strictly convex function $ f(x):\mathbb{R}^n\rightarrow\mathbb{R} $, we have $ (g_x-g_y)^{\mathrm{T}}(x-y)>0,\ \forall x\neq y $, where $g_x\in \partial f(x)$ and $g_y\in \partial f(y)$.

Define the projection of $x$ onto a closed convex set $\Omega$ as 
\begin{eqnarray}
\label{def_projection}
\mathcal{P}_{\Omega}(x)=\arg \min\nolimits_{y\in \Omega}\left\|x-y \right\|
\end{eqnarray}
Use ${\rm{Id}}$ to denote the identity operator, i.e., ${\rm{Id}}(x)=x$, $\forall x$. Define $N_\Omega(x)=\{v|\left\langle v, y-x\right\rangle\le 0, \forall y\in \Omega\}$. We have $\mathcal{P}_{\Omega}(x)=({\rm{Id}}+N_{\Omega})^{-1}(x)$ \cite[Chapter 23.1]{bauschke2011convex}.

A basic property of a projection is 
\begin{eqnarray}
\label{Basic_projection}
(x-\mathcal{P}_{\Omega}(x))^{\rm T}(y-\mathcal{P}_{\Omega}(x))\le 0,\ \forall x\in\mathbb{R}^n,\  y\in \Omega 
\end{eqnarray} 
Moreover, we also have \cite[Theorem 1.5.5]{facchinei2003finite}
\begin{eqnarray}
\label{Property1_projection}
(\mathcal{P}_{\Omega}(x)-\mathcal{P}_{\Omega}(y))^{\rm T}(x-y)\ge \left\|\mathcal{P}_{\Omega}(x)-\mathcal{P}_{\Omega}(y) \right\|^2
\end{eqnarray}

Define  $ V(x):=\frac{1}{2}\big(\left\| x - \mathcal{P}_{\Omega}(y)\right\|^2-\left\| x - \mathcal{P}_{\Omega}(x)\right\|^2\big) $, and then $ V(x) $ is differentiable and convex with respect to $ x $ \cite[Lemma 4]{liu2013one}. Moreover, we have
\begin{align}
\label{Property2_projection}
V(x)&=\frac{1}{2}\left\|\mathcal{P}_{\Omega}(x)-\mathcal{P}_{\Omega}(y) \right\|^2 \nonumber\\
&\qquad\quad -(x-\mathcal{P}_{\Omega}(x))^{\rm T}(\mathcal{P}_{\Omega}(y)-\mathcal{P}_{\Omega}(x))\\
\label{Property2_projection2}
&\ge \frac{1}{2}\left\|\mathcal{P}_{\Omega}(x)-\mathcal{P}_{\Omega}(y) \right\|^2\ge 0 \\
\label{Property3_projection}
\nabla V(x)&=\mathcal{P}_{\Omega}(x)-\mathcal{P}_{\Omega}(y)
\end{align}
where the inequality is due to \eqref{Basic_projection}. From \eqref{Property2_projection}, $ V(x)=0 $ holds only when $ \mathcal{P}_{\Omega}(x)=\mathcal{P}_{\Omega}(y) $.

\subsection{Network model}
A power network is usually composed of multiple buses, which are connected with each other through transmission lines. It can be modeled as a graph $\mathcal{G}:=(\mathcal{N}, \mathcal{E})$, where  $\mathcal{N}=\{0,1,2,...n\}$ is the set of buses and
$\mathcal{E}\subseteq \mathcal{N}\times \mathcal{N}$ is the set of edges (transmission lines). 
%If a pair of nodes $i$ and $j$ are connected by a tie line directly, we denote the tie line by $(i,j)\in \mathcal{E}$.
Let $m= |\mathcal{E}|$ denote the number of lines.
The buses are divided into two types: generator buses, denoted by $\mathcal{N}_g$ and load buses, denoted by $\mathcal{N}_l$. A generator bus contains a generator ( possibly with certain aggregate load). A load bus has only load with no generator.  
The graph $\mathcal{G}$ is treated as directed with an arbitrary orientation and use $(i,j)\in \mathcal{E}$
or $i\rightarrow j$ interchangeably to denote a directed edge from $i$ to $j$. 
Without loss of generality, we assume the graph is connected
and node $0$ is a reference node. The incidence matrix of the graph is denoted by $C$, and we have $\textbf{1}^{\rm T}C=0$.

We adopt a second-order linearized model to describe the frequency dynamics of each bus. We assume that the  lines are lossless and adopt the DC power flow model \cite{Distributed_I:Wang, Changhong:Design}. For each bus $j\in \mathcal{N}$, let $\theta_j(t)$ denote the rotor angle at node $j$ at time $t$ and $\omega_j(t)$ the frequency. \footnote{Sometimes, we also omit $t$ for simplicity.} Let $P^l_j(t)$ denote the controllable load. Let given constant $P^m_j$ denote any change in power injection, that occurs on the generation side or the load side, or both. Define $ \theta_{ij}=\theta_{i}-\theta_{j} $ as the angle difference between bus $ i $ and $ j $, and its compact form is denoted by $ \theta_{e}=(\theta_{ij}, (i,j)\in\mathcal{E}) $.
Then for each node $j\in \mathcal{N}$, the dynamics are
\begin{subequations}
	\begin{align}
	\dot \theta_{ij} & = \omega_i-\omega_j, \quad j\in \mathcal{N}
	\label{eq:model.1a}
	\\
	\dot \omega_j & = \frac{1}{M_j} \left( P^m_j - P^{l}_j  -D_j \omega_j \right. \nonumber\\
	& \quad  + \sum\nolimits_{i: i\rightarrow j} B_{ij}\theta_{ij} 
	-  \sum\nolimits_{k: j\rightarrow k} B_{jk}\theta_{jk}\bigg), \  j\in \mathcal{N}_g
	\label{eq:model.1b}
	\\
	0 & =   P^m_j - P^{l}_j -D_j \omega_j  \nonumber\\
	& \quad + \sum\nolimits_{i: i\rightarrow j} \! B_{ij}\theta_{ij} 
	-  \sum\nolimits_{k: j\rightarrow k} \! B_{jk}\theta_{jk},   j\in \mathcal{N}_l
	\label{eq:model.1c}
	\end{align}
	\label{eq:model.1}where $M_j>0$ are inertia constants, $D_j>0$ are damping constants, and $B_{jk}>0$ are line parameters that depend on the reactance of the line $(j,k)$. 
\end{subequations}

The scenario is that: the system operates in a steady state at first. A certain power imbalance occurs due to variation of power injection $P_j^m$. Then controllable load accordingly changes its output to eliminate the imbalance. 
%\subsection{Load frequency control with nonsmooth cost}
%
%
%For example, the price may have a step change 
%
%However, the electricity price sometimes changes step by step, which is so-called tiered pricing, as shown in Fig.\ref{fig_pricing}. 
%\begin{figure}[htp]
%	\centering
%	\includegraphics[width=0.25\textwidth]{tiered_pricing}
%	\caption{Tiered pricing. The X-axis is the controllable load power, and the Y-axis is the electricity price. The positive power implies the increase of controllable load, while the negative value implies the cutting down of controllable load.}
%	\label{fig_pricing}
%\end{figure}
%In this situation, the cost of controllable load is continuous piecewise affine, which is nonsmooth and thus not differentiable. Generally, more power implies a higher price, so the cost function is still convex. 
%
%In addition, for a convex piecewise affine function, it is not strictly convex. However, we can use different strictly convex functions to approximate every affine part and guarantee that the whole function is also strictly convex. 

\section{Problem Formulation}
In this section, we first formulate the optimal load frequency problem with a nonsmooth objective function. Then, we propose a distributed controller based on the Clark generalized gradient to drive the power system to the optimal solution. 
\subsection{Optimization problem}
The optimization problem is 
\begin{subequations}
	\setlength{\abovedisplayskip}{4pt}	
	\setlength{\belowdisplayskip}{4pt}
	\label{OLC}     
	\begin{align}
	\min\limits_{P_j^l,\ \phi_{j}}\quad& f(P^l)= \sum\nolimits_{j\in \mathcal{N}} f_j(P_j^l)
	\label{OLC_1}
	\\ 
	\text{s.t.} \quad
	& 0=P_j^l - P_j^m - \sum\nolimits_{i:i\rightarrow j} B_{ij}(\phi_{i}-\phi_{j}) \nonumber\\
	&\qquad\quad+\sum\nolimits_{k:j\rightarrow k} B_{jk}(\phi_{j}-\phi_{k})  , \ j\in \mathcal{N}
	\label{OLC_2}
	\\
	\label{OLC_3}
	&\underline P_j^l\le P_j^l\le \overline P_j^l, \quad j\in \mathcal{N} \\
	\label{OLC_4}
	&{\underline\theta}_{ij} \le  \phi_{i}-\phi_{j} \le {\overline\theta}_{ij},\quad (i,j) \in \mathcal{E} 
	\end{align}
\end{subequations}
where $ \underline P_j^l\le \overline P_j^l $ are constants, denoting the lower and upper bound of $ P_j^l $. $ {\underline\theta}_{ij} \le {\overline\theta}_{ij} $ are also constants, denoting the lower and upper bound of angle difference. 
The first constraint is the local power balance. $ \phi_{j} $ is the virtual phase angle, which equals to $ \theta_{j} $ at the optimal solution. 
Use $ \phi_{ij}=\phi_{i}-\phi_{j} $ to denote the virtual phase angle difference. In the DC power flow, we have $ P_{ij}=B_{ij}\theta_{ij} $, where $ P_{ij} $ is the power of line $(i,j)$. Thus, \eqref{OLC_4} is in fact the tie-line power limit constraint.
We have the following assumptions.
\begin{assumption}\label{convex}
	$ f_j(P_j^l) $ is strictly convex.
\end{assumption}

\begin{assumption}
	\label{Slater}
	The Slater's condition  \cite[Chapter 5.2.3]{boyd2004convex} of \eqref{OLC} holds, i.e., problem \eqref{OLC} is feasible provided that the constraints are affine. 
\end{assumption}

\begin{remark}
	Assumption \ref{convex} could be further relaxed, since a non-strictly convex function can be strictly convexified by using a  nonlinear perturbation \cite{mangasarian1979nonlinear}.
\end{remark}

\begin{remark}\label{nonsmooth}
	Problem \eqref{OLC} allows the cost function  $ f_j(P_j^l) $ to be nonsmooth, which is required to be differentiable in the existing literature \cite{Changhong:Design, mallada2017optimal, Distributed_I:Wang, Kasis:Primary1, Distributed_II:Wang,zhao2018distributed, wang2018distributed}. Thus, the problem \eqref{OLC} is more general and suitable for a variety of real problems whose regulation costs are not smooth. 
	A typical example is the tiered price, where the price discontinuously increases with respect to the amount of controllable load. 
	It also should be noted that the decision variable $ \phi_{j} $ is absent in the objective function of \eqref{OLC}. It makes the paper not a trivial application of \cite{ zeng2018distributed}, i.e., the objective function is not required to be strictly convex to all the decision variables. It makes the convergence proof more challenging.
\end{remark}

\begin{remark}\label{differential_inclusion}
	In the existing literature, the controller usually involves the projection of a gradient onto a convex set. If the objective function is nonsmooth, it becomes the projection of a subdifferential set onto a convex set. In this situation, the existence of trajectories is not guaranteed \cite{zeng2018distributed,zhou2019adaptive}, which makes existing load frequency control methods inapplicable to the nonsmooth case.
\end{remark}

\subsection{Controller Design}
To help the controller design, we make a modification on the problem \eqref{OLC}.
\begin{subequations}
	\setlength{\abovedisplayskip}{4pt}	
	\setlength{\belowdisplayskip}{4pt}
	\label{OLC2.0}     
	\begin{align}
	\min\limits_{P_j^l,\ \phi_{j}}\quad& f(P^l)= \sum\nolimits_{j\in \mathcal{N}} f_j(P_j^l) + \frac{1}{2}\sum\nolimits_{j\in \mathcal{N}} z_j^2
	\label{OLC2.1}
	\\ 
	\text{s.t.} \quad &
	\eqref{OLC_2}, \eqref{OLC_3}, \eqref{OLC_4}
	\end{align}
\end{subequations}
where $ z_j= P_j^l - P_j^m - \sum\nolimits_{i:i\rightarrow j} B_{ij}\phi_{ij} +\sum\nolimits_{k:j\rightarrow k} B_{jk}\phi_{jk} $. For any feasible solution to \eqref{OLC}, $ z_j=0 $. Thus, \eqref{OLC} and \eqref{OLC2.0} have same solutions. 
%The Lagrangian of \eqref{OLC2.0} is 
%\begin{align}
%&\mathop{{L}(P^l, \phi,\eta^-, \eta^+, \mu)}\limits_{\underline P^l\le P^l\le \overline P^l}= \sum\limits_{j\in \mathcal{N}} f_j(P_j^l) + \frac{1}{2}\sum\limits_{j\in \mathcal{N}} z_j^2 \nonumber\\
%&+ \sum\limits_{j\in \mathcal{N}} \mu_j\bigg(P_j^l - P_j^m - \sum\limits_{i:i\rightarrow j} B_{ij}\phi_{ij} +\sum\limits_{k:j\rightarrow k} B_{jk}\phi_{jk}   \bigg) \nonumber\\
%&+ \sum_{(i,j)\in E} \eta^-_{ij} \left(\underline{\theta}_{ij}-\phi_{ij}(t)\right)
%+ \sum_{(i,j)\in E} \eta^+_{ij} \left(\phi_{ij}(t)-\overline{\theta}_{ij}\right)
%\end{align}
%where $ \mu_j, \eta^-_{ij} $ and $ \eta^+_{ij} $ are Lagrangian multipliers.

Define the sets 
\begin{eqnarray}
\setlength{\abovedisplayskip}{4pt}	
\setlength{\belowdisplayskip}{4pt}
\Omega_j:=\left\{P_j^l\ |\ \underline P_j^l\le P_j^l\le \overline P_j^l\right\},\ \Omega=\prod\nolimits_{j=1}^n\Omega_j
\end{eqnarray}
Then, we give the controller for each controllable load, which is denoted by OLC. 
\begin{subequations}
	\setlength{\abovedisplayskip}{4pt}	
	\setlength{\belowdisplayskip}{4pt}
	\label{Controller}
	\begin{align}
	\label{Controller1}
	\dot {{d}}_{j}&\in\left\{p:p=-{{d}}_{j} + P_j^l + \omega_j -g_j(P_j^l) -z_j -\mu_j, \right.\nonumber\\
	& \quad\left. g_j(P_j^l)\in \partial f_j(P_j^l)\right\}  \\
	\label{Controller2}
	\dot \mu_{j}&= P_j^l - P_j^m - \sum\limits_{i:i\rightarrow j} B_{ij}\phi_{ij} +\sum\limits_{k:j\rightarrow k} B_{jk}\phi_{jk}   \\
	\label{Controller3}
	\dot \phi_{j}&= \sum\limits_{i:i\rightarrow j} B_{ij}(\mu_i-\mu_j) - \sum\limits_{k:j\rightarrow k} B_{jk}(\mu_j-\mu_k) \nonumber\\
	&\ -\sum_{(i,j)\in \cal E} \eta^-_{ij}+\sum_{(j,k)\in \cal E} \eta^-_{jk} +\sum_{(i,j)\in \cal E} \eta^+_{ij}-\sum_{(j,k)\in \cal E} \eta^+_{jk}\nonumber\\
	& \ + \sum\limits_{i:i\rightarrow j} B_{ij}(z_i-z_j) - \sum\limits_{k:j\rightarrow k} B_{jk}(z_j-z_k)\\
	\label{Controller4}
	\dot {\varphi}^+_{ij}&=-{\varphi}^+_{ij}+{\eta}^+_{ij}+\phi_{ij}-\overline{\theta}_{ij}
	\\
	\label{Controller5}
	\dot {\varphi}^-_{ij}&=-{\varphi}^-_{ij}+{\eta}^-_{ij}+\underline{\theta}_{ij}-\phi_{ij}
	\\
	\label{Controller6}
	P_j^l &= \mathcal{P}_{\Omega_j}\left(d_{j}\right)\\ 
	\label{Controller7}
	{\eta}^+_{ij} &= \mathcal{P}_{\mathbb{R}_+}\left({\varphi}^+_{ij}\right)\\
	\label{Controller8} 
	{\eta}^-_{ij} &= \mathcal{P}_{\mathbb{R}_+}\left({\varphi}^-_{ij}\right)
	\end{align}
\end{subequations}
%where $g_j(P_j^l)\in \partial f_j(P_j^l)$.
%The compact form is 
%\begin{subequations}
%	\setlength{\abovedisplayskip}{4pt}	
%	\setlength{\belowdisplayskip}{4pt}
%	\label{Controllerc}
%	\begin{align}
%	\label{Controllerc1}
%	\dot {{d}}&=-{{d}} + P^l + \omega -g(P^l) -z -\mu  \\
%	\label{Controllerc2}
%	\dot \mu  &= P^l - P^m +CBC^T\phi   \\
%	\label{Controllerc3}
%	\dot \phi &= -CBC^T\mu-C\eta^-+C\eta^+-CBC^Tz\\
%	\label{Controllerc4}
%	\dot {\varphi}^+&=-{\varphi}^++{\eta}^+ -C^T\phi -\overline{\theta} 
%	\\
%	\label{Controllerc5}
%	\dot {\varphi}^- &= -{\varphi}^- +{\eta}^- +\underline{\theta} +C^T\phi 
%	\\
%	\label{Controllerc6}
%	P^l &= \mathcal{P}_{\Omega}\left(d\right)\\ 
%	\label{Controllerc7}
%	{\eta}^+ &= \mathcal{P}_{\mathbb{R}_+}\left({\varphi}^+\right)\\
%	\label{Controllerc8} 
%	{\eta}^- &= \mathcal{P}_{\mathbb{R}_+}\left({\varphi}^-\right)
%	\end{align}
%\end{subequations}
Combining with the power system dynamics, we have the closed-loop system \eqref{eq:model.1}, \eqref{Controller}.

\begin{remark}[Load demand estimate]\label{load_estimate}
	In power systems, the load demand $ P_j^l $ is  difficult to measure. Similar to  \cite{mallada2017optimal,zhao2018distributed,Distributed_II:Wang}, $ P_j^l - P_j^m $ in \eqref{Controller2} can be substituted equivalently in following ways. For $ j\in \mathcal{N}_g $, 
	\begin{equation*}
	\setlength{\abovedisplayskip}{4pt}	
	\setlength{\belowdisplayskip}{4pt}
	P_j^l - P_j^m = -M_j\dot \omega_j  -D_j \omega_j + \sum\nolimits_{i: i\rightarrow j} P_{ij}
	-  \sum\nolimits_{k: j\rightarrow k} P_{jk}
	\end{equation*}
	For $ j\in \mathcal{N}_l $, 
	\begin{equation*}
	\setlength{\abovedisplayskip}{4pt}	
	\setlength{\belowdisplayskip}{4pt}
	P_j^l - P_j^m = -D_j \omega_j + \sum\nolimits_{i: i\rightarrow j} P_{ij} 
	-  \sum\nolimits_{k: j\rightarrow k} P_{jk}
	\end{equation*}
	In this way, the measurement of load demand $ P_j^l $ is avoided. We only need to measure $\omega_j, P_{ij}$, which are much easier to realize. Moreover, the power loss  can be treated as unknown load demand, which can be also considered by this method. 
\end{remark}

\section{Optimality and Convergence}
In this section, we address the optimality of the equilibrium point and the convergence of the closed-loop system.
\subsection{Optimality}
Denote $x=(\theta, \omega_g, d, \mu, \phi, \varphi^-, \varphi^+)$ and $y=(x, P^{l}, {\eta}^{+}, {\eta}^{-})$. Let $x^*=(\theta^*, \omega_g^*, d^*, \mu^*,$  $ \phi^*,$ $ \varphi^{-*}, \varphi^{+*})$ be an equilibrium of the closed-loop system \eqref{eq:model.1}, \eqref{Controller}. Then, there exists $ g(P^{l*})\in \partial f(P^{l*})$ such that
\begin{subequations}
	\setlength{\abovedisplayskip}{4pt}	
	\setlength{\belowdisplayskip}{4pt}
	\label{equilibrium}
	\begin{align}
	0& =C^{\rm T} \omega^*
	\label{equilibrium1}
	\\
	0& =  P^{m} - P^{l*}  -D \omega^* - CBC^{\rm T}\theta^*
	\label{equilibrium2}\\
	\label{equilibrium3}
	0&=-{{d}}^{*} + P^{l*} + \omega^* -g(P^{l*}) -\mu^*  \\
	\label{equilibrium4}
	0&= P^{l*} - P^m +  CBC^{\rm T}\phi^*   \\
	\label{equilibrium5}
	0&=  -CBC^{\rm T}\mu^* -C  \eta^{-*}+C\eta^{+*} \\
	\label{equilibrium6}
	0&=-{\varphi}^{+*} +{\eta}^{+*} -C^{\rm T}\phi^*-\overline{\theta}
	\\
	\label{equilibrium7}
	0&=-{\varphi}^{-*}+{\eta}^{-*}+\underline{\theta}+C^{\rm T}\phi^*\\
	\label{equilibrium8}
	P^{l*} &= \mathcal{P}_{\Omega}\left(d^*\right)\\
	\label{equilibrium9} 
	{\eta}^{+*} &= \mathcal{P}_{\mathbb{R}^{m}_+}\left({\varphi}^{+*}\right)\\
	\label{equilibrium10}
	{\eta}^{-*} &= \mathcal{P}_{\mathbb{R}^{m}_+}\left({\varphi}^{-*}\right)
	\end{align}
\end{subequations}

Now, we introduce the properties of the equilibrium points. 
\begin{theorem}\label{Equivalence}
	Suppose Assumptions \ref{convex} and \ref{Slater} hold. We have 
	\begin{enumerate}
		\item The nominal frequency is restored, i.e., $\omega^*_j=0$ for all $j\in \mathcal{N}$.
		\item If $ x^* $ is an equilibrium point of \eqref{eq:model.1}, \eqref{Controller}, then $ (P^{l*}, \phi^*) $ is an optimal solution to  \eqref{OLC} and $ (\mu^*, {\eta}^{+*}, {\eta}^{-*}) $ is an optimal solution to its dual problem.
		\item $\phi^*_{ij} = \theta^*_{ij}$ for all $(i,j)\in \mathcal{E}$. Moreover, the line limits are satisfied by $x^*$, implying $\underline{\theta}_{ij} \le \theta^*_{ij} \le \overline{\theta}_{ij}$ on every tie line $(i,j)\in \mathcal{E}$.
		\item At the equilibrium, 
		$(\theta^*, \phi^*, \omega_g^*, P^{l*})$ is unique, with $(\theta^*, \phi^*)$ being unique up to (equilibrium) reference angles $(\theta_0, \phi_0)$.
	\end{enumerate}
\end{theorem}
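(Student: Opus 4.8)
The plan is to prove the four items in the order listed, since each builds on its predecessors; items (1), (3), (4) rest on the kernel structure of the connected-graph Laplacian $CBC^{\rm T}$, while item (2) amounts to identifying the equilibrium relations \eqref{equilibrium} with the KKT system of \eqref{OLC}. \emph{Item 1 (frequency).} Connectedness of $\mathcal{G}$ gives $\ker C^{\rm T}=\mathrm{span}\{\mathbf{1}\}$, so \eqref{equilibrium1} forces $\omega^*=c\,\mathbf{1}$ for some $c\in\mathbb{R}$. Left-multiplying \eqref{equilibrium4} by $\mathbf{1}^{\rm T}$ and using $\mathbf{1}^{\rm T}C=0$ gives $\mathbf{1}^{\rm T}(P^{l*}-P^m)=0$, and doing the same to \eqref{equilibrium2} gives $\mathbf{1}^{\rm T}(P^m-P^{l*})=\mathbf{1}^{\rm T}D\omega^*=c\sum_j D_j$. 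Hence $c\sum_j D_j=0$, and since every $D_j>0$ we get $c=0$, i.e. $\omega^*=0$.

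\emph{Item 2 (optimality).} I will show \eqref{equilibrium} is exactly the KKT system of the convex program \eqref{OLC} (equivalently \eqref{OLC2.0}, whose extra term $\frac{1}{2}\sum z_j^2$ has zero gradient wherever $z=0$, which holds at equilibrium by \eqref{equilibrium4}), reading $\mu^*$ as the multiplier of the balance equality \eqref{OLC_2} and $\eta^{+*},\eta^{-*}$ as those of the upper and lower tie-line bounds in \eqref{OLC_4}. Primal feasibility: \eqref{OLC_2} is \eqref{equilibrium4}; \eqref{OLC_3} holds because \eqref{equilibrium8} expresses $P^{l*}$ as a projection onto $\Omega$; for \eqref{OLC_4}, combining \eqref{equilibrium6} with \eqref{equilibrium9} yields $\phi^*_{ij}-\overline\theta_{ij}=\varphi^{+*}_{ij}-\mathcal{P}_{\mathbb{R}_+}(\varphi^{+*}_{ij})\le 0$, and \eqref{Basic_projection} with $\Omega=\mathbb{R}_+$ gives the complementary-slackness identity $\eta^{+*}_{ij}(\phi^*_{ij}-\overline\theta_{ij})=0$; \eqref{equilibrium7}, \eqref{equilibrium10} handle the lower bound symmetrically. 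Dual feasibility $\eta^{+*},\eta^{-*}\ge 0$ is immediate from \eqref{equilibrium9}--\eqref{equilibrium10}. Stationarity in $\phi$ is \eqref{equilibrium5} verbatim. Stationarity in $P^l$: by \eqref{equilibrium8}, $d^*-P^{l*}=d^*-\mathcal{P}_\Omega(d^*)\in N_\Omega(P^{l*})$, and rewriting the left side via \eqref{equilibrium3} with $\omega^*=0$ (Item 1) gives $-g(P^{l*})-\mu^*\in N_\Omega(P^{l*})$, i.e. $0\in\partial f(P^{l*})+\mu^*+N_\Omega(P^{l*})$ for that selection $g(P^{l*})\in\partial f(P^{l*})$. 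Since \eqref{OLC} is convex with affine constraints and Slater holds (Assumptions \ref{convex}, \ref{Slater}), the KKT conditions are necessary and sufficient, so $(P^{l*},\phi^*)$ solves \eqref{OLC} and $(\mu^*,\eta^{+*},\eta^{-*})$ solves its dual.

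\emph{Items 3 and 4.} With $\omega^*=0$, \eqref{equilibrium2} reads $CBC^{\rm T}\theta^*=P^m-P^{l*}$ and \eqref{equilibrium4} reads $CBC^{\rm T}\phi^*=P^m-P^{l*}$; subtracting, $CBC^{\rm T}(\theta^*-\phi^*)=0$, so $\theta^*-\phi^*\in\mathrm{span}\{\mathbf{1}\}$ and hence $\theta^*_{ij}=\phi^*_{ij}$ on every line, while $\underline\theta_{ij}\le\theta^*_{ij}\le\overline\theta_{ij}$ follows from feasibility of $\phi^*$ in \eqref{OLC_4} (Item 2). For uniqueness: $\omega_g^*=0$ by Item 1; $P^{l*}$ is unique because $f$ is strictly convex and depends only on $P^l$ over a convex feasible set, so two optima would admit a strict decrease at the midpoint of their $P^l$-components, a contradiction; and, since $\mathbf{1}^{\rm T}(P^m-P^{l*})=0$, the Laplacian equation $CBC^{\rm T}\theta^*=P^m-P^{l*}$ pins $\theta^*$ down up to an additive constant (the reference angle $\theta_0$), likewise $\phi^*$ up to $\phi_0$, whereas the edge quantities $C^{\rm T}\theta^*=C^{\rm T}\phi^*$ are themselves unique.

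\emph{Main obstacle.} The crux is Item 2: matching the closed-loop equilibrium relations with the KKT system of the nonsmooth problem — keeping the incidence-matrix sign conventions consistent across the $\phi$- and $\mu$-stationarity equations, carrying the Clarke-subgradient selection $g(P^{l*})\in\partial f(P^{l*})$ through \eqref{equilibrium3}, and recognizing that the auxiliary $(\varphi^\pm,\eta^\pm)$ relations \eqref{equilibrium6}--\eqref{equilibrium10} jointly encode precisely primal feasibility plus complementary slackness of the tie-line limits. The remainder is routine: Laplacian linear algebra together with the standard sufficiency of KKT under convexity and Slater's condition.
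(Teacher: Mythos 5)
Your proposal is correct and follows essentially the same route as the paper: kernel arguments on the Laplacian $CBC^{\rm T}$ for items 1, 3, 4, and identification of the projection fixed-point relations \eqref{KKT0} with the normal-cone/KKT system of \eqref{OLC} for item 2. You merely spell out primal feasibility and complementary slackness of the tie-line constraints more explicitly than the paper (which compresses this into a citation of the KKT theorem), and in doing so you correctly use the sign of $\phi_{ij}$ from \eqref{Controller4} rather than the sign as transcribed in \eqref{equilibrium6}.
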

\begin{proof}
	1) From \eqref{equilibrium2} and \eqref{equilibrium4}, we have $ \textbf{1}^{\rm T}D\omega^*=0 $. From \eqref{equilibrium1}, we have $ \omega^*= \omega_0\cdot\textbf{1}$ with a constant $ \omega_0 $. As $D$ is a diagonal positive definite matrix, we have $ \omega_0=0 $.
	
	2) From \eqref{equilibrium3} and \eqref{equilibrium6}-\eqref{equilibrium10}, we have 
	\begin{subequations}\label{KKT0}
		\setlength{\abovedisplayskip}{4pt}	
		\setlength{\belowdisplayskip}{4pt}
		\begin{align}
		\label{KKT1}
		P^{l*} &= \mathcal{P}_{\Omega}\left( P^{l*} -g(P^{l*}) -\mu^*\right)\\
		\label{KKT2}
		{\eta}^{+*} &= \mathcal{P}_{\mathbb{R}^{m}_+}\left({\eta}^{+*} -C^{\rm T}\phi^*-\overline{\theta}\right)\\
		\label{KKT3}
		{\eta}^{-*} &= \mathcal{P}_{\mathbb{R}^{m}_+}\left({\eta}^{-*}+\underline{\theta}+C^{\rm T}\phi^*\right)
		\end{align}
	\end{subequations}
	or equivalently, 
	\begin{subequations}\label{KKT2.0}
		\setlength{\abovedisplayskip}{4pt}	
		\setlength{\belowdisplayskip}{4pt}
		\begin{align}
		\label{KKT2.1}
		-g(P^{l*}) -\mu^*&\in N_{\Omega}(P^{l*})\\
		\label{KKT2.2}
		-C^{\rm T}\phi^*-\overline{\theta}&\in N_{\mathbb{R}^{m}_+}({\eta}^{+*})\\
		\label{KKT2.3}
		\underline{\theta}+C^{\rm T}\phi^*&\in N_{\mathbb{R}^{m}_+}({\eta}^{-*})
		\end{align}
	\end{subequations}
	By the KKT condition in \cite[Theorem 3.34]{ruszczynski2006nonlinear}, \eqref{equilibrium4}, \eqref{equilibrium5} and \eqref{KKT2.0} coincide with the KKT optimality condition of the problem \eqref{OLC}. Then, we have this assertion.
	
	3) From \eqref{equilibrium2} and \eqref{equilibrium4}, we have $ CBC^{\rm T}(\theta^*-\phi^*)=0 $, which holds for any incidence matrix $ C $. Thus, we have $ \theta^*-\phi^*=c_0\cdot\textbf{1} $ with a constant $c_0$. Then, we have $ \theta_{ij}^*-\phi_{ij}^*=0 $. Moreover, by 2), we know $\underline{\theta}_{ij} \le \phi_{ij}^*\le \overline{\theta}_{ij}$, which implies that $\underline{\theta}_{ij} \le \theta_{ij}^*\le \overline{\theta}_{ij}$.	
	
	4) $ P^{l*}$ is unique because the objective function in \eqref{OLC_1} is strictly convex in $P^{l}$. $\omega^*$ is unique due to $\omega^*=0$. By \eqref{equilibrium4}, we know $\phi^*$ is unique modulo a rigid (uniform) rotation of all angles. Since $\theta^* - \phi^*=c_0\cdot\textbf{1} $, it implies that $\theta^*$ is also unique modulo a rigid rotation. 
	This proves the uniqueness of $(\theta^*, \phi^*, \omega_g^*, P^{l*})$.
\end{proof}

\subsection{Convergence}
Define the function
%\begin{align}\label{V}
%	&V(x)\triangleq\frac{1}{2}\left(\left\| d - P^{l*}\right\|^2-\left\| d - P^l\right\|^2\right) + \frac{1}{2}\left\|\mu - \mu^*\right\|^2 \nonumber \\
%	&\quad + \frac{1}{2}\left\|\phi - \phi^*\right\|^2  + \frac{1}{2}\left(\left\| \varphi^+ - \eta^{+*}\right\|^2-\left\| \varphi^+ - \eta^{+}\right\|^2\right) \nonumber \\
%	&\quad  + \frac{1}{2}\left(\left\| \varphi^- - \eta^{-*}\right\|^2-\left\| \varphi^- - \eta^{-}\right\|^2\right) + \frac{1}{2}\left(\theta_e - \theta_e^*\right)^{\rm T}B\left(\theta_e - \theta_e^*\right) \nonumber\\
%	&\quad  + \frac{1}{2}\left(\omega_g - \omega_g^*\right)^{\rm T}M\left(\omega_g - \omega_g^*\right)
%\end{align}     
\begin{equation}\label{V2}
\setlength{\abovedisplayskip}{4pt}	
\setlength{\belowdisplayskip}{4pt}
V(x)=V_1(x)+V_2(x)
\end{equation} 
where 
\begin{align}\label{V_1}
&V_1(x)= \frac{1}{2}\left\| P^l - P^{l*}\right\|^2 + \frac{1}{2}\left\|\mu - \mu^*\right\|^2+ \frac{1}{2}\left\|\phi - \phi^*\right\|^2  \nonumber \\
& \quad  + \frac{1}{2}\left\| \eta^{+} - \eta^{+*}\right\|^2 + \frac{1}{2}\left(\theta_e - \theta_e^*\right)^{\rm T}B\left(\theta_e - \theta_e^*\right)  \nonumber\\
&\quad + \frac{1}{2}\left\| \eta^{-} - \eta^{-*}\right\|^2  + \frac{1}{2}\left(\omega_g - \omega_g^*\right)^{\rm T}M\left(\omega_g - \omega_g^*\right) 
\end{align}
\begin{align}
\label{V_2}
&V_2(x)=-(d-P^l)^{\rm T}(P^{l*}-P^l)-(\varphi^+-\eta^+)^{\rm T}(\eta^{+*}-\eta^+) \nonumber\\
&\qquad\quad\  -(\varphi^--\eta^-)^{\rm T}(\eta^{-*}-\eta^-)
\end{align}      

Then, we have the following result about $ V(x) $.
\begin{lemma}\label{V_dot}
	Suppose Assumptions \ref{convex} and \ref{Slater} hold. Then the function $ V(x) $ has following properties
	\begin{enumerate}
		\item $V(x)\ge0$ and $V(x)=0$ holds only at the equilibrium point.
		\item The time derivative of $V(x(t))$ satisfies $ \dot V(x(t))\leq 0  $.	
	\end{enumerate}
\end{lemma}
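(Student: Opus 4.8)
The plan is to first rewrite the Lyapunov candidate \eqref{V2} so that both its nonnegativity and its differentiability along trajectories become transparent, and then to bound $\dot V$. In $V_1$ I would pair $\frac{1}{2}\|P^l-P^{l*}\|^2$ with the bilinear term $-(d-P^l)^{\mathrm T}(P^{l*}-P^l)$ of $V_2$, and likewise pair $\frac{1}{2}\|\eta^+-\eta^{+*}\|^2$ and $\frac{1}{2}\|\eta^--\eta^{-*}\|^2$ with the corresponding bilinear terms. Since $P^l=\mathcal P_\Omega(d)$ and $P^{l*}=\mathcal P_\Omega(d^*)$ by \eqref{equilibrium8} (and $\eta^\pm=\mathcal P_{\mathbb{R}^m_+}(\varphi^\pm)$, $\eta^{\pm*}=\mathcal P_{\mathbb{R}^m_+}(\varphi^{\pm*})$ by \eqref{Controller7}, \eqref{Controller8} and \eqref{equilibrium9}, \eqref{equilibrium10}), identity \eqref{Property2_projection} identifies each of these three pairs with a function of the type introduced just before \eqref{Property2_projection}; call them $W_d(d)$, $W_+(\varphi^+)$, $W_-(\varphi^-)$. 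Then
\begin{align*}
V(x)&=W_d(d)+W_+(\varphi^+)+W_-(\varphi^-)\\
&\quad+\frac{1}{2}\|\mu-\mu^*\|^2+\frac{1}{2}\|\phi-\phi^*\|^2\\
&\quad+\frac{1}{2}(\theta_e-\theta_e^*)^{\mathrm T}B(\theta_e-\theta_e^*)\\
&\quad+\frac{1}{2}(\omega_g-\omega_g^*)^{\mathrm T}M(\omega_g-\omega_g^*),
\end{align*}
which is a sum of nonnegative terms by \eqref{Property2_projection2} and positive definiteness of $B$ and $M$, so $V(x)\ge0$; in addition $W_d,W_+,W_-$ are $C^1$ with gradients $P^l-P^{l*}$, $\eta^+-\eta^{+*}$, $\eta^--\eta^{-*}$ by \eqref{Property3_projection}. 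If $V(x)=0$ then every summand vanishes: \eqref{Property2_projection2} forces $P^l=P^{l*}$, $\eta^+=\eta^{+*}$, $\eta^-=\eta^{-*}$; positive definiteness forces $\theta_e=\theta_e^*$ and $\omega_g=\omega_g^*=0$; the remaining norms force $\mu=\mu^*$ and $\phi=\phi^*$; and then the algebraic equations \eqref{eq:model.1c} with $\textbf{1}^{\mathrm T}C=0$ force $\omega_l=0$. Substituted back into \eqref{eq:model.1}, \eqref{Controller} these values give $\dot x=0$, so $x$ is an equilibrium; this settles item~1.

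For item~2 I would differentiate $V(x(t))$ along a (Filippov) solution of the closed loop. By the $C^1$ representation above,
\begin{align*}
\dot V&=(P^l-P^{l*})^{\mathrm T}\dot d+(\eta^+-\eta^{+*})^{\mathrm T}\dot\varphi^+\\
&\quad+(\eta^--\eta^{-*})^{\mathrm T}\dot\varphi^-+(\mu-\mu^*)^{\mathrm T}\dot\mu+(\phi-\phi^*)^{\mathrm T}\dot\phi\\
&\quad+(\theta_e-\theta_e^*)^{\mathrm T}B\dot\theta_e+(\omega_g-\omega_g^*)^{\mathrm T}M\dot\omega_g.
\end{align*}
I would then substitute the closed-loop right-hand sides \eqref{eq:model.1a}, \eqref{eq:model.1b}, \eqref{Controller1}--\eqref{Controller5} and subtract the equilibrium relations \eqref{equilibrium2}--\eqref{equilibrium7} (using $\omega^*=0$ from Theorem \ref{Equivalence}), so that each right-hand side is in deviation variables, e.g. $\dot d=-(d-d^*)+(P^l-P^{l*})+\omega-(g(P^l)-g(P^{l*}))-z-(\mu-\mu^*)$ for some measurable selection $g(P^l)\in\partial f(P^l)$, $\dot\mu=z=(P^l-P^{l*})+CBC^{\mathrm T}(\phi-\phi^*)$, and $M\dot\omega_g=-(P^l-P^{l*})_g-D_g\omega_g-(CBC^{\mathrm T}(\theta-\theta^*))_g$. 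Three observations then finish the estimate: (i) property \eqref{Property1_projection} applied to the $d$-block gives the nonpositive contribution $-(\mathcal P_\Omega(d)-\mathcal P_\Omega(d^*))^{\mathrm T}(d-d^*)+\|\mathcal P_\Omega(d)-\mathcal P_\Omega(d^*)\|^2\le0$, and the $\varphi^+$- and $\varphi^-$-blocks give the analogous nonpositive contributions with $\mathcal P_{\mathbb{R}^m_+}$; (ii) monotonicity of $\partial f$ (Assumption \ref{convex}) gives $-(P^l-P^{l*})^{\mathrm T}(g(P^l)-g(P^{l*}))\le0$; (iii) every remaining term is bilinear and cancels in pairs --- the $\omega$--$P^l$ and $\omega$--$\theta$ products cancel by symmetry of $CBC^{\mathrm T}$, the $\phi$--$\eta^\pm$ products arising in $\dot\varphi^\pm$ cancel those arising in $\dot\phi$, and the $\mu$- and $z$-products collapse after inserting $z=(P^l-P^{l*})+CBC^{\mathrm T}(\phi-\phi^*)$, leaving exactly $-\|z\|^2$. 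To merge the generator-bus and load-bus contributions into full-network vectors one also adds, as a zero, the deviation form of \eqref{eq:model.1c} contracted with $\omega_l^{\mathrm T}$. Collecting everything yields $\dot V\le-\omega^{\mathrm T}D\omega-\|z\|^2\le0$.

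The hard part will be step (iii): a fairly large number of sign-indefinite bilinear terms must vanish simultaneously, which works only when the deviation forms are taken against the exact equilibrium relations \eqref{equilibrium} (these already carry the orientation conventions of the incidence matrix $C$), the identity $z=(P^l-P^{l*})+CBC^{\mathrm T}(\phi-\phi^*)$ is used in both the $\dot\mu$- and the $\dot\phi$-terms, and the load-bus equation is appended so that $\omega_g$-sums become $\omega$-sums. By contrast, the nonsmoothness of $f$ is essentially inert here: it only requires fixing one subgradient $g(P^{l*})\in\partial f(P^{l*})$ at the equilibrium and reading the inclusion \eqref{Controller1} along the trajectory as a pointwise a.e.\ selection $g(P^l(t))\in\partial f(P^l(t))$; monotonicity of $\partial f$ then makes (ii) hold verbatim, so the nonsmooth case goes through exactly as the differentiable one.
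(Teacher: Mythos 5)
Your proof follows essentially the same route as the paper's: regrouping $V_1+V_2$ into the three projection-based $C^1$ functions plus quadratics is exactly what the paper uses implicitly when it invokes \eqref{Property3_projection} to write $\nabla_d V=P^l-P^{l*}$, and your $\dot V$ computation reproduces the paper's decomposition \eqref{dotV2.1}--\eqref{dotV2.5} (the projection inequality \eqref{Property1_projection} for the $d$- and $\varphi^\pm$-blocks, monotonicity of $\partial f$, the $-\tilde\omega^{\rm T}D\tilde\omega$ term, the collapse of the remaining bilinears to $-\|\tilde z\|^2$, and the device \eqref{load_freq} of appending the load-bus equation contracted with $\omega_l$). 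The one soft spot --- concluding from $V(x)=0$ that $\dot x=0$, when the vanishing of $W_d$ only pins down $\mathcal{P}_{\Omega}(d)=P^{l*}$ and not $d=d^*$ (and likewise for $\varphi^\pm$) --- is equally present in the paper's even terser argument for item 1, so it is an imprecision you inherited rather than introduced.
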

\begin{proof}
	1) By \eqref{Basic_projection}, we know that $ V_2(x)\ge 0 $.                             
	From \eqref{V2}, \eqref{V_1} and \eqref{V_2}, we know $V(x)\ge0$ and  $V(x)=0$ holds only at the equilibrium point.
	
	2) By \eqref{Property3_projection}, the gradient of $V$ is
	\begin{equation}
	\setlength{\abovedisplayskip}{4pt}	
	\setlength{\belowdisplayskip}{4pt}
	\nabla V
	=\begin{bmatrix}
	\nabla_{ d} V \\
	\nabla_{\mu} V \\
	\nabla_{\phi} V \\
	\nabla_{\eta^+} V \\
	\nabla_{\eta^-} V \\
	\nabla_{\theta} V \\
	\nabla_{\omega_g} V
	\end{bmatrix}
	=\begin{bmatrix}
	P^l-P^{l*} \\
	\mu - \mu^* \\
	\phi - \phi^* \\
	\eta^+ - \eta^{+*}\\
	\eta^- - \eta^{-*}\\
	B\left(\theta_e - \theta_e^*\right) \\
	M\left(\omega_g - \omega_g^*\right)
	\end{bmatrix}
	\end{equation}  
	
	Then, there is $ g(P^{l})\in \partial f(P^{l})$ such that the time derivative of $V$ is 
	\begin{align} \label{dotV1}
	&\dot V = (P^l-P^{l*})^{\rm T}(-{{d}} + P^l +\omega -g(P^l)-z -\mu) \nonumber\\
	& +(\mu - \mu^*)^{\rm T}(P^l-P^m+CBC^{\rm T}\phi)+ \left(\theta_e - \theta_e^*\right)^{\rm T}BC^{\rm T}\omega \nonumber\\
	&+ \left(\phi - \phi^*\right)^{\rm T} \left(-CBC^{\rm T}\mu-C\eta^-+C\eta^+-CBC^{\rm T}z\right) \nonumber\\
	&  + \left(\eta^+ - \eta^{+*}\right)^{\rm T}\left(-{\varphi}^+ +{\eta}^+ -C^{\rm T}\phi -\overline{\theta} \right)  \nonumber\\
	& + \left(\eta^- - \eta^{-*}\right)^{\rm T}\left(-{\varphi}^- +{\eta}^- +\underline{\theta} +C^{\rm T}\phi \right) \nonumber\\
	&  + \left(\omega - \omega^*\right)^{\rm T}(P^{m} - P^l  -D \omega-CB\theta_e)
	\end{align}
	where the last item is due to  the fact that, for each $ j\in \mathcal{N}_l $ 
	\begin{align}\label{load_freq}
	\setlength{\abovedisplayskip}{4pt}	
	\setlength{\belowdisplayskip}{4pt}
	0&=\left(\omega_j-\omega_j^*\right)\left(P^m_j - P^{l}_j -D_j \omega_j \right. \nonumber\\
	& \qquad \qquad + \sum\limits_{i: i\rightarrow j}  B_{ij}\theta_{ij} 
	-  \sum\limits_{k: j\rightarrow k}  B_{jk}\theta_{jk}\bigg)
	\end{align}

	Combing \eqref{dotV1} and \eqref{equilibrium}, we have 
	\begin{subequations}\label{dotV2}
		\setlength{\abovedisplayskip}{4pt}	
		\setlength{\belowdisplayskip}{4pt}
		\begin{align} 
		\dot V &= (\tilde P^l)^{\rm T}(-{\tilde {d}} +\tilde  P^l +\tilde \omega -g(P^l) +g(P^{l*}) -\tilde z-\tilde \mu)  \nonumber\\
		&+ \tilde\phi ^{\rm T} \left(-CBC^{\rm T}\tilde\mu-C\tilde\eta^-+C\tilde\eta^+-CBC^{\rm T}\tilde z\right) \nonumber\\
		&+\tilde \mu^{\rm T}(\tilde P^l +CBC^{\rm T}\tilde\phi) + \left(\tilde \eta^+ \right)^{\rm T}\left(-\tilde {\varphi}^+ +\tilde {\eta}^+ -C^{\rm T}\tilde \phi  \right) \nonumber\\
		& + \left(\tilde \eta^- \right)^{\rm T}\left(-\tilde {\varphi}^- +\tilde {\eta}^- +C^{\rm T}\tilde \phi \right) \nonumber\\
		&+\tilde \theta ^{\rm T}BC^{\rm T}\tilde \omega + \tilde \omega^{\rm T}( - \tilde P^l  -D\tilde  \omega-CB\tilde \theta)\nonumber\\
		&=- (P^l-P^{l*})^{\rm T}({{d}}-d^*) + \left\| P^l-P^{l*}\right\|^2 \label{dotV2.1} \\
		&-(\eta^+ -\eta^{+*})^{\rm T}({\varphi}^+-{\varphi}^{+*}) + \left\|\eta^+ -\eta^{+*}\right\|^2 \label{dotV2.2}\\
		&-(\eta^- -\eta^{-*})^{\rm T}({\varphi}^--{\varphi}^{-*}) + \left\|\eta^- -\eta^{-*}\right\|^2 \label{dotV2.3}\\
		&\label{dotV2.4} - (P^l-P^{l*})^{\rm T}\left(g(P^l) - g(P^{l*})\right)- \tilde \omega^{\rm T}D\tilde  \omega \\
		\label{dotV2.5}
		&-(\tilde P^l)^{\rm T}\tilde z - \tilde\phi ^{\rm T}CB C^{\rm T}\tilde z
		\end{align}
	\end{subequations}
	where $\tilde x=x-x^*$. 
	
	By the \cite[Theorem 1.5.5]{facchinei2003finite}, we have 
	\begin{equation}
	\setlength{\abovedisplayskip}{4pt}	
	\setlength{\belowdisplayskip}{4pt}
	- (P^l-P^{l*})^{\rm T}({{d}}-d^*) + \left\| P^l-P^{l*}\right\|^2\le 0
	\end{equation}
	Similarly, $ -(\eta^+ -\eta^{+*})^{\rm T}({\varphi}^+-{\varphi}^{+*}) + \left\|\eta^+ -\eta^{+*}\right\|^2 \le 0 $ and $ -(\eta^- -\eta^{-*})^{\rm T}({\varphi}^--{\varphi}^{-*}) + \left\|\eta^- -\eta^{-*}\right\|^2\le0 $ also hold.

	The convexity of $f$ implies that 
	\begin{equation}
	\setlength{\abovedisplayskip}{4pt}	
	\setlength{\belowdisplayskip}{4pt}
	- (P^l-P^{l*})^{\rm T}\left(g(P^l) - g(P^{l*})\right) \le 0
	\end{equation}
	We also have $-\tilde \omega^{\rm T}D\tilde  \omega\le0$ because $D$ is positive definite. In addition,
	\begin{equation}\label{dotV2.50}
	\setlength{\abovedisplayskip}{4pt}	
	\setlength{\belowdisplayskip}{4pt}
	%	-(\tilde P^l)^{\rm T}\tilde z - \tilde\phi ^{\rm T}CB C^{\rm T}\tilde z=-\left(\tilde P^l+CBC^{\rm T}\tilde\phi \right)^{\rm T}\left(\tilde P^l+CBC^{\rm T}\tilde\phi \right) \le 0
	-(\tilde P^l)^{\rm T}\tilde z - \tilde\phi ^{\rm T}CB C^{\rm T}\tilde z=-\tilde z^{\rm T}\cdot\tilde z \le 0
	\end{equation}
	Then, \eqref{dotV2.1}-\eqref{dotV2.5} are all nonpositive, i.e., $ \dot V(x(t))\leq 0  $.
\end{proof}

The following result shows the stability of the closed-loop system \eqref{eq:model.1}, \eqref{Controller}.
\begin{theorem}
	Suppose Assumptions \ref{convex} and \ref{Slater} hold. Then the trajectory of the closed loop system \eqref{eq:model.1}, \eqref{Controller} has following properties
	\begin{enumerate}
		\item $(x(t), P^{l}(t), \eta^+(t), \eta^-(t)) $ is bounded.
		\item $(x(t), P^{l}(t), \eta^+(t), \eta^-(t)) $  converges to  equilibrium of the closed-loop system \eqref{eq:model.1}, \eqref{Controller}.	
		\item The convergence of $x(t)$ is a point, i.e., $x(t)\rightarrow x^*$ as $t\rightarrow\infty$ for some equilibrium point $ x^* $.
	\end{enumerate}
\end{theorem}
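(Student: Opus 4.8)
The plan is to run a LaSalle-type invariance argument for the closed-loop \emph{differential inclusion}, using $V$ from \eqref{V2} as the certificate provided by Lemma~\ref{V_dot}. As a preliminary step (cf.\ Remark~\ref{differential_inclusion}) I would record well-posedness: writing \eqref{eq:model.1},\eqref{Controller} as $\dot x\in\mathcal F(x)$, the only set-valued ingredient is $g_j(\mathcal P_{\Omega_j}(d_j))\in\partial f_j(\mathcal P_{\Omega_j}(d_j))$, and since $\partial f_j$ is upper semicontinuous with nonempty convex compact values while $\mathcal P_{\Omega_j}$ is single-valued and nonexpansive, $\mathcal F$ is upper semicontinuous, locally bounded, with nonempty convex compact values; hence absolutely continuous solutions exist on $[0,\infty)$, and because $V\in C^1$ (Preliminaries) the map $t\mapsto V(x(t))$ is locally Lipschitz with $\frac{d}{dt}V(x(t))=\nabla V(x(t))^{\rm T}\dot x(t)\le0$ a.e.\ for every admissible selection, by Lemma~\ref{V_dot}.

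\emph{Property 1.} Since $\dot V\le0$, $V(x(t))\le V(x(0))$ for all $t\ge0$. The quadratic block $V_1$ in \eqref{V_1} is radially unbounded in $(P^l,\mu,\phi,\eta^+,\eta^-,\theta_e,\omega_g)$ (using positive definiteness of $B$ and $M$) and $V_2$ in \eqref{V_2} is nonnegative by \eqref{Basic_projection}, so these components, and hence $z$ (affine in $P^l,\phi$), remain bounded. The remaining states $d,\varphi^+,\varphi^-$ are controlled by the ``leaky integrator'' structure of \eqref{Controller1},\eqref{Controller4},\eqref{Controller5}: $\dot d_j=-d_j+(\text{bounded})$ since $P_j^l=\mathcal P_{\Omega_j}(d_j)$ is confined to the box $\Omega_j$, and $\dot\varphi^{\pm}_{ij}=-\varphi^{\pm}_{ij}+\eta^{\pm}_{ij}+(\text{bounded})$ with $\eta^{\pm}_{ij}$ bounded and equal to $\varphi^{\pm}_{ij}$ whenever the latter is positive; hence $d,\varphi^{\pm}$ are bounded, which gives Property~1.

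\emph{Property 2.} Boundedness permits invoking a LaSalle invariance principle for differential inclusions: the trajectory approaches the largest weakly invariant set $\mathcal M$ on which the set-valued derivative of $V$ along $\mathcal F$ contains $0$. On $\mathcal M$ each summand of the decomposition \eqref{dotV2.1}--\eqref{dotV2.5} (all already shown $\le0$ in Lemma~\ref{V_dot}) must vanish; in particular $\tilde\omega^{\rm T}D\tilde\omega=0$ gives $\omega\equiv0$, $\tilde z^{\rm T}\tilde z=0$ gives $z\equiv0$, and $(P^l-P^{l*})^{\rm T}(g(P^l)-g(P^{l*}))=0$ forces $P^l\equiv P^{l*}$ by the strict convexity of $f$ (Assumption~\ref{convex} together with the subgradient monotonicity recorded in the Preliminaries). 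Feeding these into the weakly invariant dynamics \eqref{Controller} then pins the remaining motions ($\dot\theta_e=C^{\rm T}\omega=0$, $\dot\mu=z=0$, and, after also identifying $\eta^{\pm}$ and $\phi$, $\dot\phi=\dot\varphi^{\pm}=\dot d=0$), so every point of $\mathcal M$ solves \eqref{equilibrium}; thus $(x(t),P^l(t),\eta^+(t),\eta^-(t))$ converges to the equilibrium set.

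\emph{Property 3.} Pick $x^*\in\mathcal M$, which is an equilibrium by Property~2, and re-instantiate $V$ in \eqref{V2} relative to this particular $x^*$ (legitimate, since Lemma~\ref{V_dot} holds for any equilibrium). Then $t\mapsto V(x(t))$ is nonincreasing and bounded below, so $V(x(t))\to V_\infty\ge0$; along a sequence $t_k\to\infty$ with $x(t_k)\to x^*$ we get $V(x(t_k))\to V(x^*)=0$, whence $V(x(t))\to0$. The quadratic part of $V_1$ then gives $(P^l,\mu,\phi,\eta^+,\eta^-,\theta_e,\omega_g)(t)\to(P^{l*},\mu^*,\phi^*,\eta^{+*},\eta^{-*},\theta_e^*,\omega_g^*)$ (and $\theta(t)\to\theta^*$ through $\theta_e=C^{\rm T}\theta$ with the reference node fixed), and a final estimate $\frac{d}{dt}\|d-d^*\|^2=-2\|d-d^*\|^2+2(d-d^*)^{\rm T}\varepsilon(t)$ with $\varepsilon(t)\to0$ (and its analogue for $\varphi^{\pm}$) upgrades this to $d\to d^*$, $\varphi^{\pm}\to\varphi^{\pm*}$, i.e.\ $x(t)\to x^*$. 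I expect the genuinely delicate point to be exactly this last step, together with the part of Property~2 that pins $d$ and $\varphi^{\pm}$: these ``integrator'' states are not penalized in the quadratic part of $V$, so their convergence must be extracted from the contractive $-d$, $-\varphi^{\pm}$ terms and the convergence of all other variables, and the nonsmoothness of $\partial f_j$ (so that $g(P^l(t))$ need only approach, not converge inside, $\partial f_j(P^{l*})$) is precisely what makes $\varepsilon(t)\to0$ — hence the single-point conclusion — nontrivial; a secondary technical point is making the nonsmooth LaSalle principle rigorous for $\mathcal F$ on the compact region supplied by Property~1.
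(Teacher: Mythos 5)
Your proposal is correct and follows essentially the same route as the paper: boundedness of the penalized states from $V$ plus a leaky-integrator (contraction) estimate for $d,\varphi^{\pm}$, the invariance principle for discontinuous/differential-inclusion dynamics to force each nonpositive summand of $\dot V$ to vanish on the limit set, and then re-instantiating $V$ at a subsequential limit point $\hat x^*$ so that $V\to 0$ yields single-point convergence via the quadratic part $V_1$. The only difference is that you are more explicit than the paper about the last step (the paper simply reads $d^*,\varphi^{\pm*}$ off the equilibrium equations \eqref{equilibrium3}, \eqref{equilibrium6}, \eqref{equilibrium7}, whereas you supply the contraction estimate with $\varepsilon(t)\to 0$ and correctly flag the selection issue for $g(P^l(t))\in\partial f(P^l(t))$ as the delicate point), which if anything makes your argument slightly more careful on that point than the paper's.
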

\begin{proof}1) 
	From Lemma \ref{V_dot}, we know that $(\theta(t), \omega_g(t), \mu(t),$  $ \phi(t),$ $P^{l}(t), \eta^+(t), \eta^-(t) )$ is bounded. By \eqref{eq:model.1c}, $ \omega_l(t) $ is also bounded. Since $\partial f(P^l)$ is compact, there exists a constant $a_1$ such that 
	\begin{equation}
	\setlength{\abovedisplayskip}{4pt}	
	\setlength{\belowdisplayskip}{4pt}
	\left\|P^{l} + \omega -g(P^{l}) -\mu\right\|< a_1
	\end{equation}
	Define following function
	\begin{equation}
	\tilde V_d(d) =\frac{1}{2} \left\|d\right\|^2
	\end{equation}
	The time derivative of $ \tilde V_d(d) $ along the closed-loop system is 
	\begin{align}
	\dot{\tilde V}_d &=d^{\rm T}(-{{d}} + P^{l} + \omega -g(P^{l}) -\mu)\nonumber\\
	&=-\left\|d\right\|^2+d^{\rm T}( P^{l} + \omega -g(P^{l}) -\mu)\nonumber\\
	&\le -\left\|d\right\|^2+a_1\left\|d\right\|
	=-2\tilde V_d+a_1\sqrt{ 2\tilde V_d}
	\end{align}
	Thus, $ \tilde V_d(d(t)), t\ge 0 $ is bounded, so is $ d(t), t\ge0 $. Similarly, we can also have that $ \varphi ^+(t), t\ge 0 $ and $ \varphi ^-(t), t\ge 0 $ are bounded.
	
	2) By the invariance principle in \cite[Theorem 2]{cortes2008discontinuous}, we know that the trajectory $x(t)$  converges to the largest weakly invariant subset $W^*$ contained in $W:=\{\ x\ |\ \dot V(x)=0\ \}$, i.e., once a trajectory enters this subset, it will never departure from it.
	
	From $ \tilde \omega^{\rm T}D\tilde  \omega =0$, we know $ \tilde  \omega =0 $, i.e., $ \omega(t) = \omega^* $.
	Note that $ (P^l-P^{l*})^{\rm T}\left(g(P^l) - g(P^{l*})\right)>0 $ if $ x\neq x^* $ due to the strict convexity of $ f(P^l) $. Thus, we have $ P^l(t)=P^{l*} $ in the set $W^*$. Moreover, from \eqref{dotV2.50}, we have $ \tilde P^l(t)=CBC^{\rm T}\tilde\phi(t) $, or $ 0= \dot{\tilde P}^l(t)=CBC^{\rm T}\dot{\tilde\phi}(t) $, which implies that $ \dot{\tilde\phi}(t)=0 $. Then, we have $ \dot\mu_j(t)=0 $ from \eqref{Controller2}. Similarly, $ \dot d_j(t)=0 $ by \eqref{Controller1}. Up to now, we know $x(t)$ is constant except $\varphi^-(t), \varphi^+(t)$ for $t\rightarrow\infty$.
	
	Moreover, the equality in \eqref{Property1_projection} holds only when $ \mathcal{P}_{\Omega}(x)=\mathcal{P}_{\Omega}(y) $ or $x=\mathcal{P}_{\Omega}(x)$ and $y=\mathcal{P}_{\Omega}(y) $. Thus, for $ \eta^+(t),\ \eta^-(t), t\rightarrow \infty $, there are four combinations: 
	\begin{enumerate}
		\item  $ \eta^+(t) =\eta^{+*} $ and $ \eta^-(t) =\eta^{-*} $; 
		\item  $ \eta^+(t) =\eta^{+*} $ and $ {\varphi}^-(t)= \mathcal{P}_{\mathbb{R}^{m}_+}\left({\varphi}^{-}(t)\right)=\eta^{-}(t)$, $ {\varphi}^{-*}= \mathcal{P}_{\mathbb{R}^{m}_+}\left({\varphi}^{-*}\right)=\eta^{-*}$; 
		\item  $ {\varphi}^+(t)= \mathcal{P}_{\mathbb{R}^{m}_+}\left({\varphi}^{+}(t)\right)=\eta^{+}(t), {\varphi}^{+*}= \mathcal{P}_{\mathbb{R}^{m}_+}\left({\varphi}^{+*}\right)=\eta^{+*}$ and $ \eta^- =\eta^{-*} $; 
		\item  $ {\varphi}^+(t)= \mathcal{P}_{\mathbb{R}^{m}_+}\left({\varphi}^{+}(t)\right)=\eta^{+}(t), {\varphi}^{+*}= \mathcal{P}_{\mathbb{R}^{m}_+}\left({\varphi}^{+*}\right)=\eta^{+*}$ and $ {\varphi}^-(t)= \mathcal{P}_{\mathbb{R}^{m}_+}\left({\varphi}^{-}(t)\right)=\eta^{-}(t), {\varphi}^{-*}= \mathcal{P}_{\mathbb{R}^{m}_+}\left({\varphi}^{-*}\right)=\eta^{-*}$.
	\end{enumerate}
	%	With \eqref{System5}, $ \dot{\tilde\phi}(t)=\dot{\phi}(t)=0 $ and $\mu(t)$, $z(t)$ being constant, we have $\eta^-(t)=\eta^+(t)+c_0$, where $c_0\in \mathbb{R}^m$ is constant. For the first three combinations, we have $\eta^-(t), \eta^+(t)$ are constant. Recall \eqref{System6} and \eqref{System7}, and we have $ \dot {\varphi}^+(t)=\dot {\varphi}^-(t)=0, t\rightarrow\infty $. For the last situation, the right hand sides of \eqref{System6} and \eqref{System7} are all constant, which implies that $ \dot {\varphi}^+(t)=\dot {\varphi}^-(t)=0, t\rightarrow\infty $ due to the boundedness of $ {\varphi}^+(t), {\varphi}^-(t) $.
	
	Thus, $(x(t), P^{l}(t), \eta^+(t), \eta^-(t)) $  converges to equilibrium of the closed-loop system. 
	
	3) Fix any initial state $x(0)$ and consider the trajectory $(x(t), t\geq 0)$ of the closed-loop system.		 
	As $x(t)$ is bounded, there exists an infinite sequence of time instants ${t_k}$ such that $x(t_k)\to\hat x^*$ as $t_k\to\infty$, 
	for some $\hat x^* \in W^*$. Using this specific equlibrium point $\hat {x}^*$ in the definition of $V$, we have 
	\begin{equation}
	\setlength{\abovedisplayskip}{4pt}	
	\setlength{\belowdisplayskip}{4pt}
	\begin{split}
	V^* =\lim\limits_{t\to \infty} V(x(t)) &= \lim\limits_{t_k\to \infty} V(x(t_k)) \nonumber\\
	&=\lim\limits_{x(t_k) \to \hat x^*} V_2\big( x(t_k)\big) = V_2(\hat x^*) = 0 \nonumber
	\end{split}
	\end{equation}
	Here, the first equality uses the  fact that $V(t)$ is nonincreasing in $t$ while lower-bounded, and therefore must render a limit value $ V^* $; the second equality uses the fact that	$t_k$ is the infinite subsequence of $t$; the third equality uses the fact that $x(t)$ is
	absolutely continuous in $t$; the fourth equality is due to the continuity of $V (x)$, and the last equality holds as $\hat {x}^*$ is an equilibrium point of $V $. 
	
	The quadratic part $V_1$  implies that $(\theta, \omega_g, P^l, \mu, \phi, \eta^-,$  $ \eta^+)\to  (\theta^*, \omega_g^*, P^{l*}, \mu^*, \phi^*, \eta^{-*}, \eta^{+*}) $ as $t\to \infty$. Moreover, from \eqref{equilibrium3}, \eqref{equilibrium6} and \eqref{equilibrium7}, we can get the corresponding $ d^*, \varphi^{-*}, \varphi^{+*} $. This completes the proof. 
\end{proof}

\section{Case studies}
\subsection{Test system}
In this section, the IEEE 68-bus New England/New York interconnection test system \cite{Changhong:Design} is utilized to illustrate the performance of the proposed controller. 
The diagram of the 68-bus system is given in Fig.\ref{fig_system}. 
We run the simulation on Matlab using the Power System Toolbox \cite{PST}. Although the linear model is used in the analysis, the simulation model is much more detailed and realistic. The generator includes a two-axis subtransient reactance model, IEEE type DC1 exciter model, and a classical power system stabilizer model. AC (nonlinear) power flows are utilized, including non-zero line resistances. The upper bound of $ P_j^l $ is the load demand value at each bus. Detailed simulation model including parameter values can be found in the data files of the toolbox. 
\begin{figure}[t]
	\centering
	\setlength{\abovecaptionskip}{0pt}
	\includegraphics[width=0.45\textwidth]{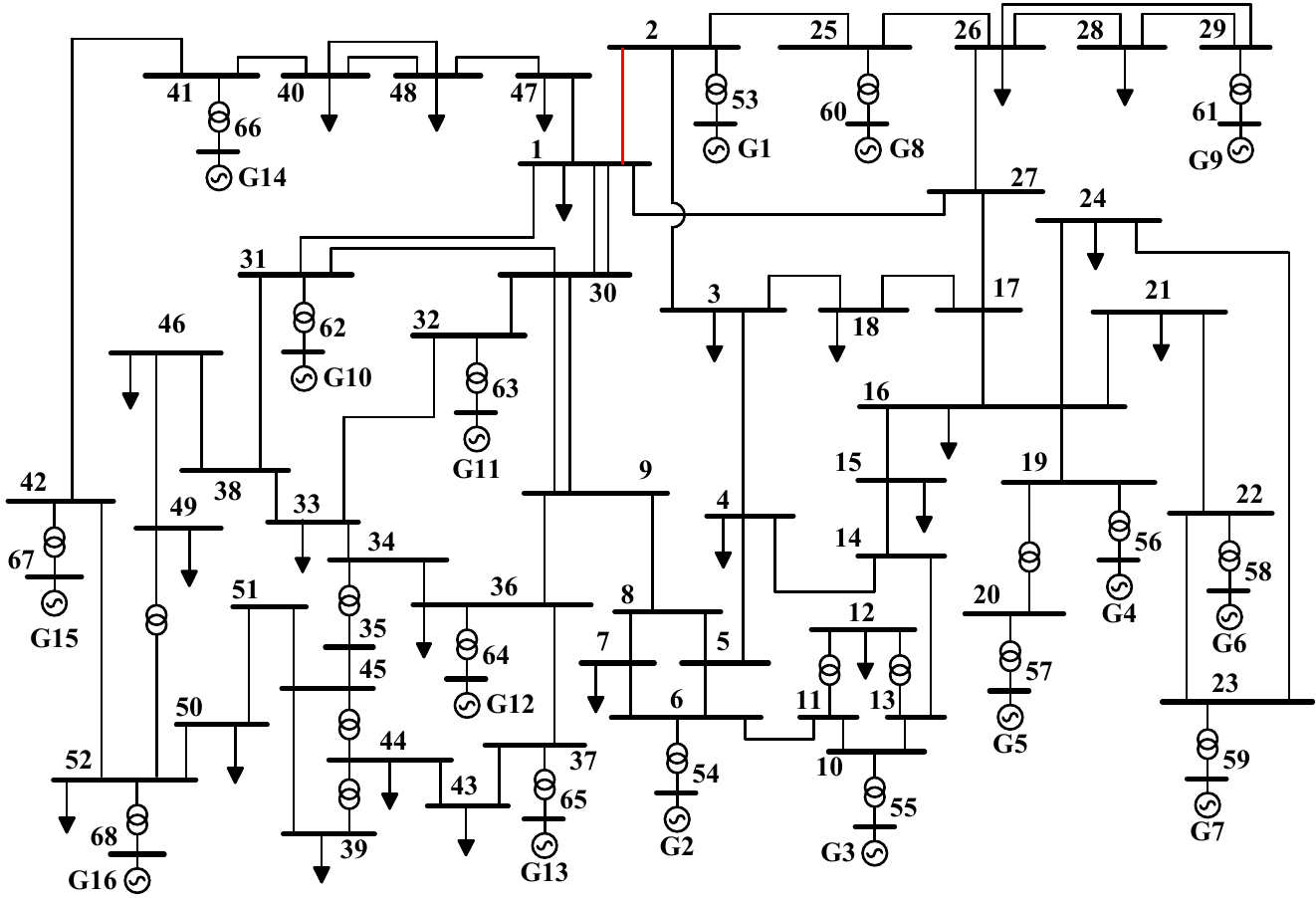}
	\caption{IEEE 68-bus system}
	\label{fig_system}
\end{figure}
The objective function of each controllable load is 
\begin{equation}
\setlength{\abovedisplayskip}{4pt}	
\setlength{\belowdisplayskip}{4pt}
\begin{aligned}
f(P_j^l)=\left\{ \begin{array}{l}
\left(P_j^l\right)^2-0.02,\quad\ \ P_j^l\le -0.2\\
\frac{1}{2}\left(P_j^l\right)^2,\qquad\quad\ \  -0.2<P_j^l\le  0.2\\
\left(P_j^l\right)^2-0.02,\quad\ \ 0.2<P_j^l 
\end{array} \right.
\end{aligned}
\end{equation}
It can be verified that $ f(P_j^l) $ is continuous, strictly convex and nonsmooth. 

\subsection{Simulation results}
We consider the following scenario: at $t=1$s, there is a step change of $(3.5, 3.5, 3.5, 3.5, 3.5, 7)$p.u. load demand at buses 4, 8, 20, 37, 42, and 52 respectively. Neither the original load demand nor its change is known. The load estimate method in Remark \ref{load_estimate} is utilized. 

At first, we do not set limits to the tie-line power. In this subsection, we analyze the dynamic performance of the closed-loop system under the proposed controller OLC. In addition, automatic generation control (AGC) is tested in the same scenario as a benchmark. The setting of AGC is the same as that in \cite{zhao2018distributed}. The frequency dynamics under OLC and AGC are given in Fig.\ref{Frequency}. It is shown that both AGC and OLC can recover the frequency to the nominal value. They also have similar frequency nadir. Compared with AGC, the frequency under OLC has faster convergence speed. 

\begin{figure}[t]
	\centering
	\setlength{\abovecaptionskip}{0pt}
	\includegraphics[width=0.32\textwidth]{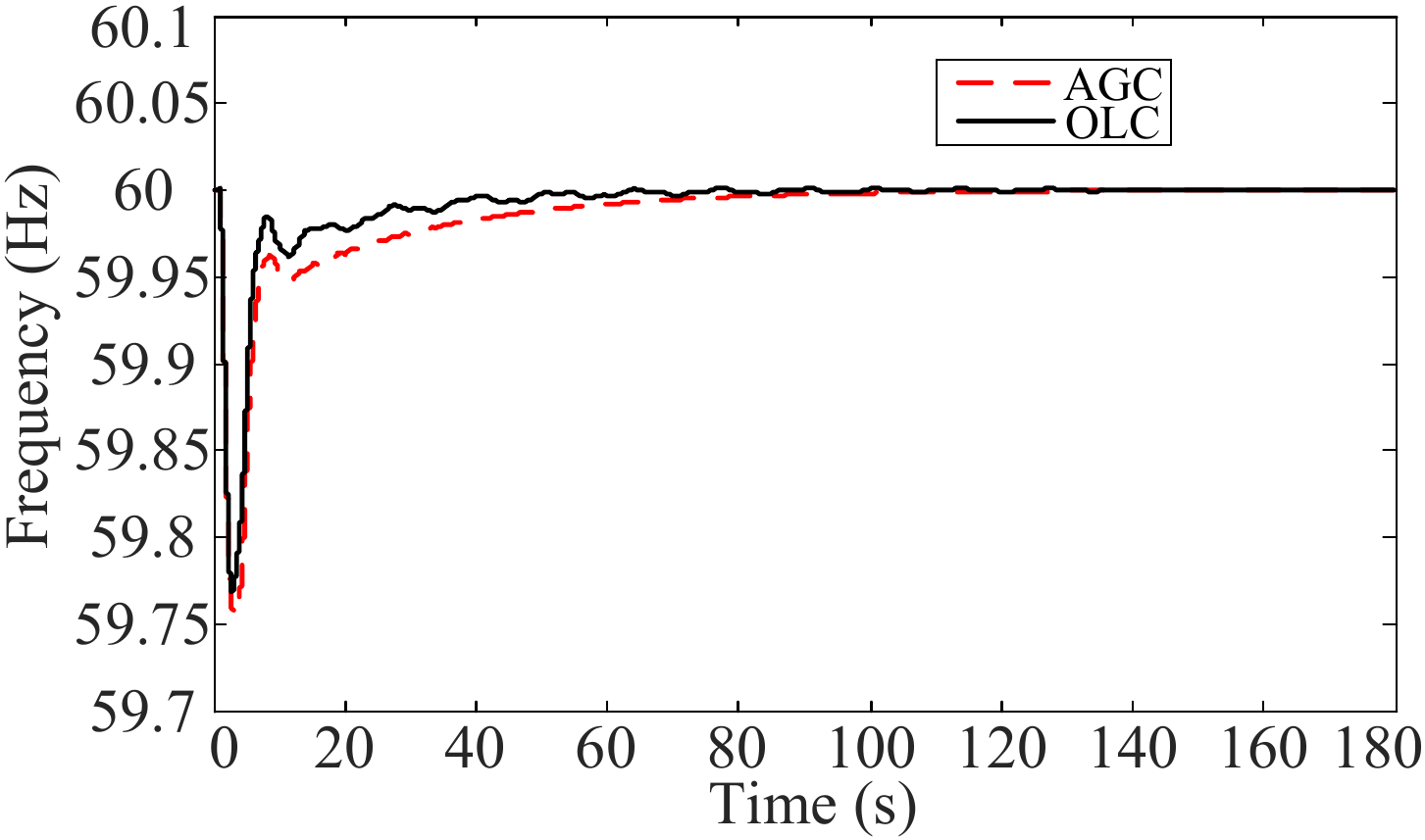}
	\caption{Frequency dynamics under AGC and OLC}
	\label{Frequency}
\end{figure}

The dynamics of $\mu$ and controllable load are illustrated in Fig.\ref{mu_load}. If tie-line power is not considered, then $\eta^{-*}=\eta^{+*}=0$. From \eqref{equilibrium5}, we know that $ CBC^{\rm T}\mu^*=0 $, i.e., $ \mu $ of each bus will converge to the same value. This is shown in the left of Fig.\ref{mu_load}, where $\mu$ of buses $ 1\sim5 $ is given. The right part of Fig.\ref{mu_load} illustrates the dynamics of controllable load at these buses. They also converge to the same value except bus 2 and bus 5 as the objective function is identical. As there is no controllable load on buses 2 and 5, their value is always zero. This validates the correctness of the proposed method.
\begin{figure}[t]
	\centering
	\setlength{\abovecaptionskip}{0pt}
	\includegraphics[width=0.46\textwidth]{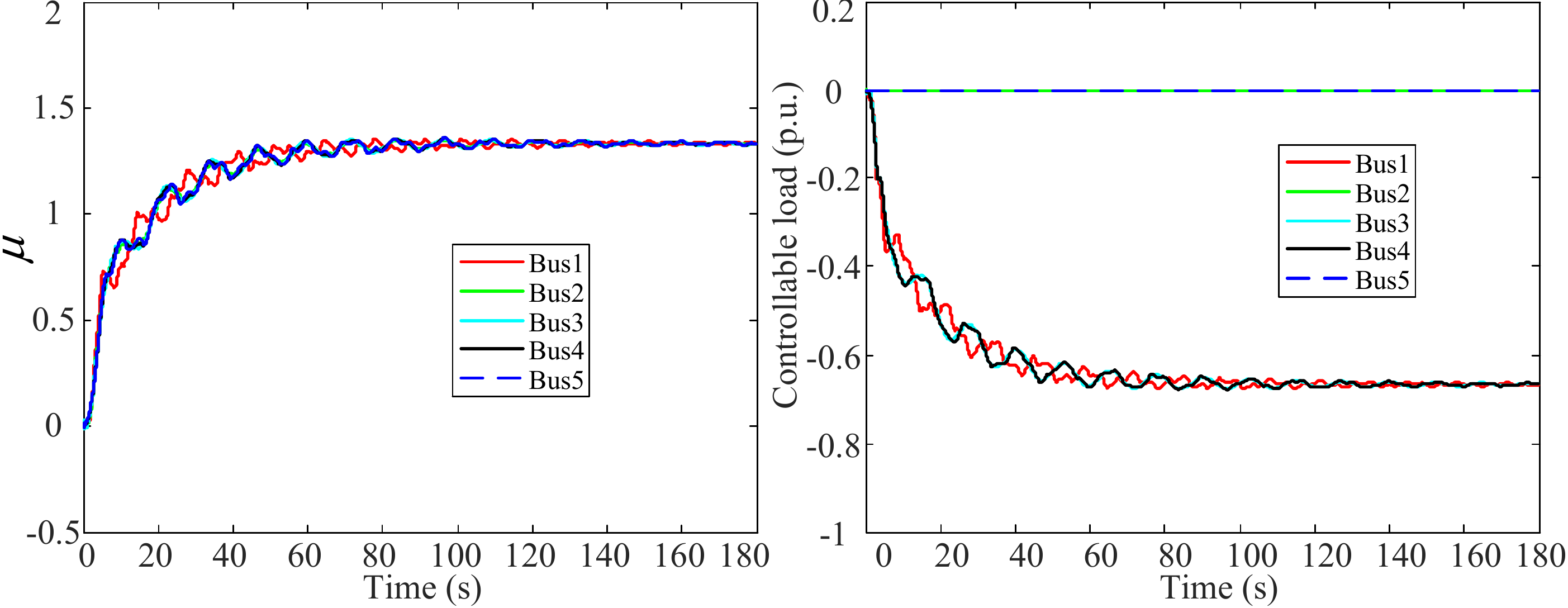}
	\caption{Dynamics of $\mu$ and controllable load}
	\label{mu_load}
\end{figure}

We further take the tie-line limits into account. In this scenario, the active power limit of line $ (1, 2) $ is set to be $0$. Then, its tie-line power dynamics with and without limit is given in Fig.\ref{tie_line}. The active power is $ -2.1 $p.u. if there is no limit. It decreases to zero if the limit is considered, which verifies the effectiveness of the line power control. 

\begin{figure}[t]
	\centering
	\setlength{\abovecaptionskip}{0pt}
	\includegraphics[width=0.32\textwidth]{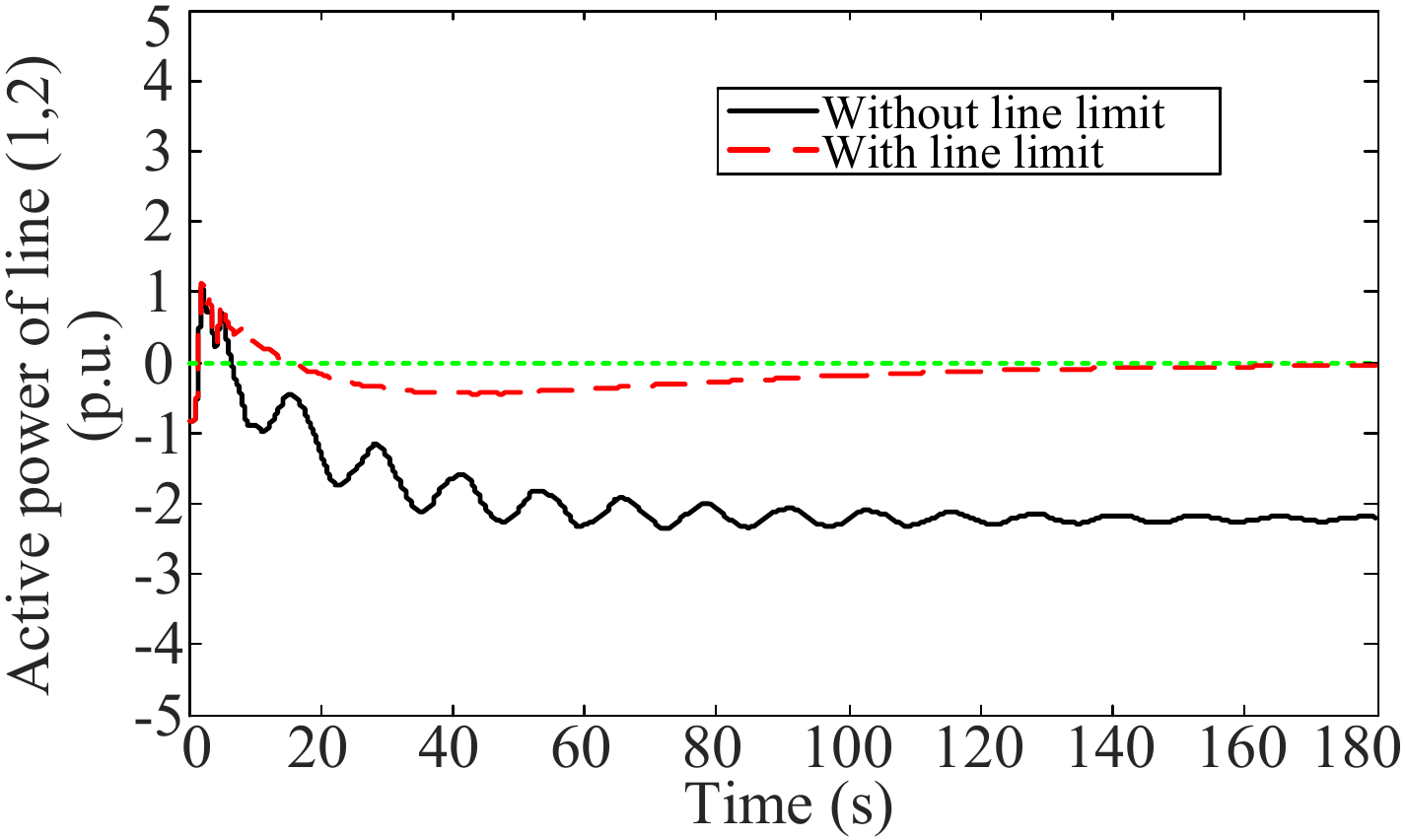}
	\caption{Active power dynamics of line $1$ and controllable load}
	\label{tie_line}
\end{figure}

\begin{figure}[t]
	\centering
	\setlength{\abovecaptionskip}{0pt}
	\includegraphics[width=0.46\textwidth]{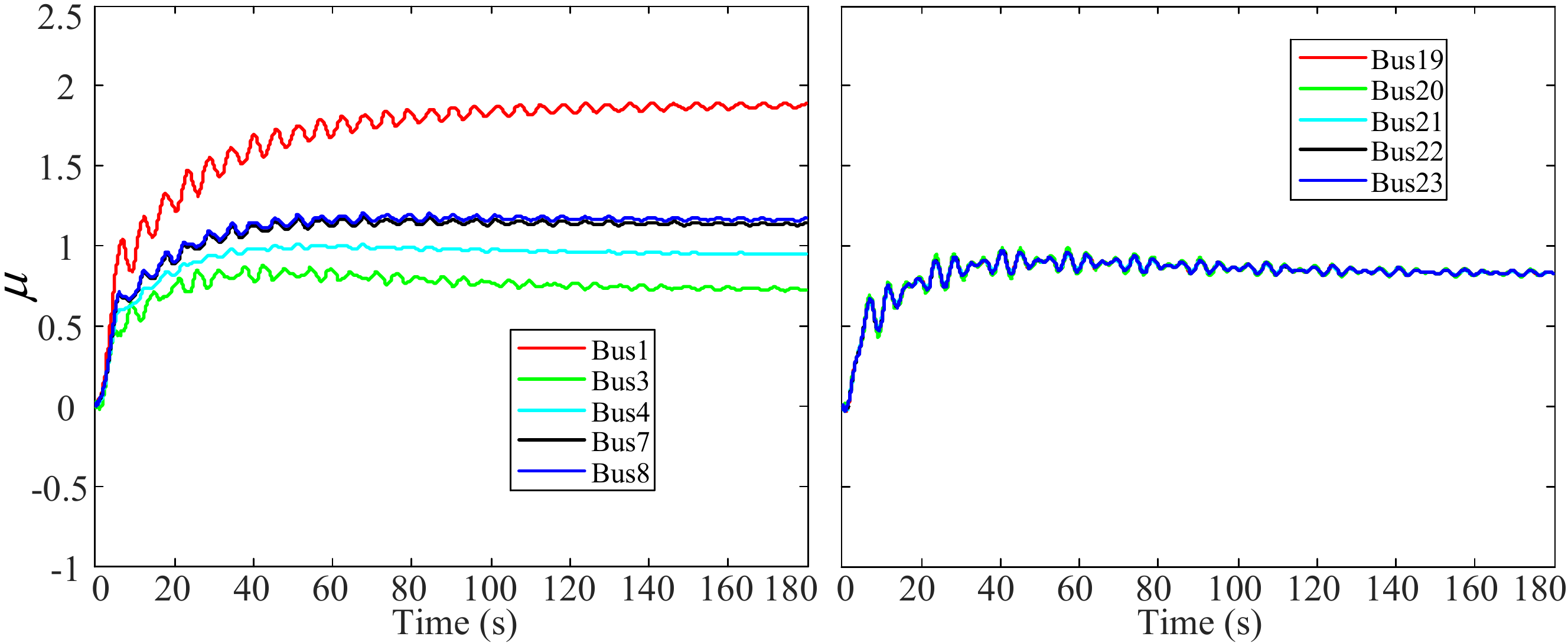}
	\caption{Dynamics of $\mu$ when line power congestion exists}
	\label{mu_congestion}
\end{figure}
If some tie-line limit is reached, the $ \mu $ of buses near from this line will diverge. This is illustrated in the left part of Fig.\ref{mu_congestion}, where $\mu_1,\mu_3,\mu_4$ are all different. We also find that the $ \mu $ of buses far from the line still converges to the same value, which is given in the right part of Fig.\ref{mu_congestion}. It is shown that $ \mu_{19}\sim\mu_{23} $ converge to $0.83$.

\section{Conclusion}
In this paper, we investigate the distributed load frequency control in power systems when the regulation cost is nonsmooth. In our formulation, both capacity limits of controllable load and line flow are considered. A distributed controller is designed, where the Clark generalized gradient is utilized to address the nonsmoothness of the objective function. In addition, we prove the optimality of the equilibrium of the closed-loop system as well as its asymptotic stability. Moreover, it is also proved that the convergence is to a specific point. Finally, numerical experiments on the IEEE 68-bus system show that the frequency is recovered to the nominal value. Compared with conventional AGC, it has faster convergence speed.

%\section*{References}

\bibliography{mybib}

\begin{thebibliography}{10}
\expandafter\ifx\csname url\endcsname\relax
  \def\url#1{\texttt{#1}}\fi
\expandafter\ifx\csname urlprefix\endcsname\relax\def\urlprefix{URL }\fi
\expandafter\ifx\csname href\endcsname\relax
  \def\href#1#2{#2} \def\path#1{#1}\fi

\bibitem{dorfler2016breaking}
F.~D{\"o}rfler, J.~W. Simpson-Porco, F.~Bullo, Breaking the hierarchy:
  Distributed control and economic optimality in microgrids, IEEE Trans.
  Control Network Syst. 3~(3) (2016) 241--253.

\bibitem{wang2019distributed}
Z.~Wang, F.~Liu, J.~Z. Pang, S.~H. Low, S.~Mei, Distributed optimal frequency
  control considering a nonlinear network-preserving model, IEEE Trans. Power
  Syst. 34~(1) (2019) 76--86.

\bibitem{Schweppe1980Homeostatic}
F.~C. Schweppe, R.~D. Tabors, J.~L. Kirtley, H.~R. Outhred, F.~H. Pickel, A.~J.
  Cox, Homeostatic utility control, IEEE Trans. Power Apparatus Syst.
  PAS-99~(3) (1980) 1151--1163.

\bibitem{zhang:real}
X.~Zhang, A.~Papachristodoulou, A real-time control framework for smart power
  networks: Design methodology and stability, Automatica 58 (2015) 43--50.

\bibitem{Stegink:aunifying}
T.~Stegink, C.~De~Persis, A.~van~der Schaft, A unifying energy-based approach
  to stability of power grids with market dynamics, IEEE Trans. Autom. Control
  62~(6) (2017) 2612--2622.

\bibitem{Li:Connecting}
N.~Li, C.~Zhao, L.~Chen, Connecting automatic generation control and economic
  dispatch from an optimization view, IEEE Trans. Control Network Syst. 3~(3)
  (2016) 254--264.

\bibitem{Cai:Distributed}
D.~Cai, E.~Mallada, A.~Wierman, Distributed optimization decomposition for
  joint economic dispatch and frequency regulation, IEEE Trans. Power Syst.
  32~(6) (2017) 4370--4385.

\bibitem{Changhong:Design}
C.~Zhao, U.~Topcu, N.~Li, S.~H.Low., Design and stability of load-side primary
  frequency control in power systems, IEEE Trans. Autom. Control 59~(5) (2014)
  1177--1189.

\bibitem{mallada2017optimal}
E.~Mallada, C.~Zhao, S.~Low, Optimal load-side control for frequency regulation
  in smart grids, IEEE Trans. Autom. Control 62~(12) (2017) 6294--6309.

\bibitem{Distributed_I:Wang}
Z.~{Wang}, F.~{Liu}, S.~H. {Low}, C.~{Zhao}, S.~{Mei}, Distributed frequency
  control with operational constraints, part i: Per-node power balance, IEEE
  Trans. Smart Grid 10~(1) (2019) 40--52.

\bibitem{Kasis:Primary1}
A.~Kasis, E.~Devane, C.~Spanias, I.~Lestas, Primary frequency regulation with
  load-side participation--part i: Stability and optimality, IEEE Trans. Power
  Syst. 32~(5) (2016) 3505--3518.

\bibitem{Distributed_II:Wang}
Z.~{Wang}, F.~{Liu}, S.~H. {Low}, C.~{Zhao}, S.~{Mei}, Distributed frequency
  control with operational constraints, part ii: Network power balance, IEEE
  Trans. Smart Grid 10~(1) (2019) 53--64.

\bibitem{zhao2018distributed}
C.~Zhao, E.~Mallada, S.~H. Low, J.~Bialek, Distributed plug-and-play optimal
  generator and load control for power system frequency regulation, Int. J.
  Electr. Power Energy Syst. 101 (2018) 1--12.

\bibitem{wang2018distributed}
Z.~Wang, F.~Liu, S.~H. Low, P.~Yang, S.~Mei, Distributed load-side control:
  Coping with variation of renewable generations, arXiv preprint
  arXiv:1804.04941 (2018).

\bibitem{zeng2018distributed}
X.~Zeng, P.~Yi, Y.~Hong, L.~Xie, Distributed continuous-time algorithms for
  nonsmooth extended monotropic optimization problems, SIAM J. Control Optim.
  56~(6) (2018) 3973--3993.

\bibitem{clarke:optimization}
F.~H. Clarke, Optimization and nonsmooth analysis, Vol.~5, Siam, 1990.

\bibitem{bauschke2011convex}
H.~Bauschke, P.~L. Combettes, Convex Analysis and Monotone Operator Theory in
  Hilbert Spaces, Springer, 2017.

\bibitem{facchinei2003finite}
F.~Facchinei, J.-S. Pang, Finite-dimensional variational inequalities and
  complementarity problems, Springer-Verlag, New York, 2003.

\bibitem{liu2013one}
Q.~Liu, J.~Wang, A one-layer projection neural network for nonsmooth
  optimization subject to linear equalities and bound constraints, IEEE Trans.
  Neural Networks Learn. Syst. 24~(5) (2013) 812--824.

\bibitem{boyd2004convex}
S.~Boyd, L.~Vandenberghe, Convex optimization, Cambridge university press,
  2004.

\bibitem{mangasarian1979nonlinear}
O.~L. Mangasarian, R.~Meyer, Nonlinear perturbation of linear programs, SIAM J.
  Control Optim. 17~(6) (1979) 745--752.

\bibitem{zhou2019adaptive}
H.~Zhou, X.~Zeng, Y.~Hong, Adaptive exact penalty design for constrained
  distributed optimization, IEEE Trans. Autom. Control, in press (2019).

\bibitem{ruszczynski2006nonlinear}
A.~P. Ruszczy{\'n}ski, A.~Ruszczynski, Nonlinear optimization, Vol.~13,
  Princeton university press, 2006.

\bibitem{cortes2008discontinuous}
J.~Cortes, Discontinuous dynamical systems, IEEE Control Syst. Mag. 28~(3)
  (2008) 36--73.

\bibitem{PST}
K.~W. Cheung, J.~Chow, G.~Rogers, Power system toolbox ver. 3.0., Rensselaer
  Polytechnic Institute and Cherry Tree Scientific Software (2009).

\end{thebibliography}

\end{document}